\begin{document}
\bstctlcite{IEEEexample:BSTcontrol}

\newcommand{\blue}[1]{\textcolor{teal}{#1}}
\newcommand{\green}[1]{\textcolor{violet}{#1}}
\newcommand{\purple}[1]{\textcolor{purple}{#1}}
\newcommand{\red}[1]{\textcolor{red}{#1}}

\newtheorem{theorem}{Theorem}
\newtheorem{lemma}[theorem]{Lemma}
\newtheorem{definition}{Definition}

\title{
  HyRES: A Hybrid Replication and Erasure Coding Approach to Data Storage \vspace{-0.6cm}
}
\author{
  \IEEEauthorblockN{Daniel E. Lucani$^{1}$, Marcell Feh\'{e}r$^{2}$}
  \IEEEauthorblockA{
    $^{1}$ Department of Electrical and Computer Engineering, Aarhus University, Denmark\\
    $^{2}$ Chocolate Cloud ApS, Denmark\\
    daniel.lucani@ece.au.dk, marcell@chocolate-cloud.cc 
  }
  \vspace{-0.8cm}
}

\maketitle

\begin{abstract}
Reliability in distributed storage systems has typically focused on the design and deployment of data replication or erasure coding techniques.
Although some scenarios have considered the use of replication for \emph{hot} data and erasure coding for \emph{cold} data in the same system, each is designed in isolation.
We propose HyRES, a hybrid scheme incorporates the best characteristics of each scheme, thus, resulting in additional design flexibility and better potential performance for the system.
We show that HyRES generalizes previously proposed hybrid schemes.
We characterize the theoretical performance of HyRES as well as that of replication and erasure coding considering the effects of the size of the storage networks.
We validate our theoretical results using simulations. These results show that HyRES can yield simultaneously lower storage costs than replication, lower probabilities of file loss than replication and erasure coding with similar worst case performance, and even lower effective repair traffic than replication when considering the network size.
\end{abstract}

\begin{IEEEkeywords}
Erasure coding, replication, distributed storage
\end{IEEEkeywords}

\section{Introduction}
Over the course of the past decades, research focused on distributed storage systems has provided a rich set of solutions to address different system requirements and delivering efficient and reliable operation.
In particular, the state-of-the-art on reliable storage has focused on two main alternatives: data replication and erasure coding. 
The former is usually a good alternative for \emph{hot data}, where the ability to provide fast access (no computation, potential to perform load balancing) and efficient loss repair as a fragment lost can be repaired by creating a new copy from the remaining replicas. 
However, these benefits come at the cost of additional storage. 
The latter is usually a good alternative for \emph{cold data}, where reducing the cost of storage by (potentially) increasing response time and increasing loss repair costs may be acceptable.
Various code designs to reduce repair costs under various system considerations have been introduced, e.g.,~\cite{Dimakis2010,Hou2019,Zhao2019,Patra2022,Patra2025}.
Recently, distributed systems requiring cost-effective security and reliability have incorporated erasure coded solutions, e.g.,~\cite{Kretsis2021,Kretsis2025}.

Many practical systems typically select one or the other due to operational simplicity or match to the particular use case.
Operating both approaches in the same system have also been pursued in practical settings, but usually to handle different data sizes, e.g., erasure codes for large data and replication for small metadata and keys in Cocytus~\cite{cocytus2017}, or optimized for specific workloads in distributed Cloud storage systems, e.g., ~\cite{legostore2022}.
Comparisons of these schemes based on their worst-case reliability (e.g., minimum number of nodes lost that causes loss of data), storage costs, or repair costs based on the occurrence of one (or more) losses have been carried out. 
However, the state-of-the-art generally lacks a fair cost-benefit comparison between these two approaches that accounts for network size, number of losses experienced by the system, and allocation policies. 

\begin{figure}[t]
  \centerline{\includegraphics[width=0.93\columnwidth]{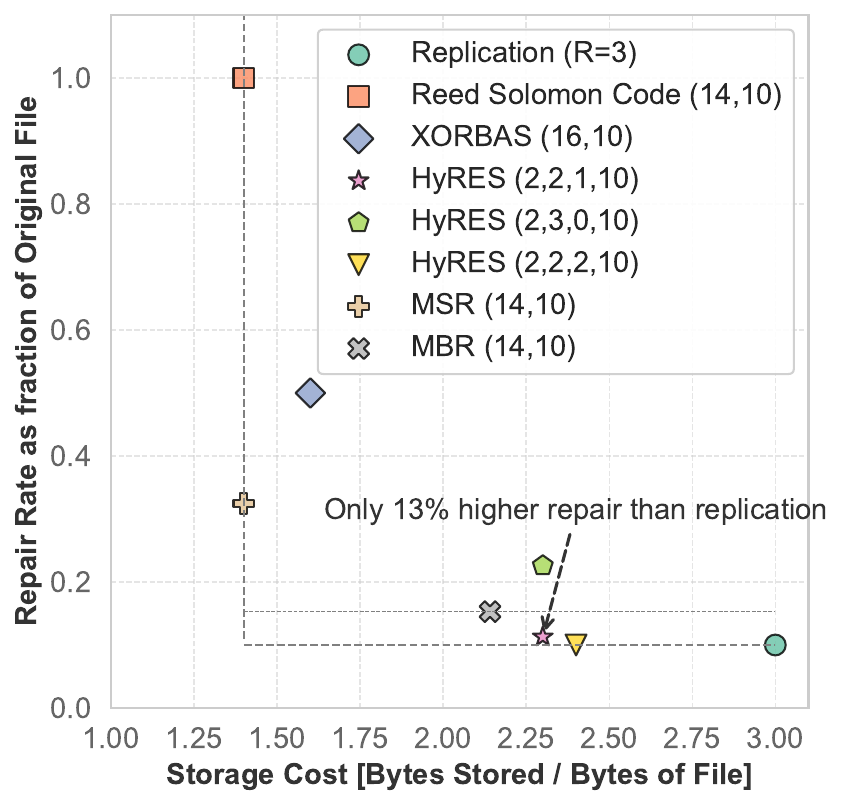}}
  \caption{ Repair Traffic versus Storage Costs for a single node loss of various schemes. Files are split into $k=10$ in the case of replication, XORBAS, Reed-Solomon, and the proposed Hybrid scheme before introducing redundancy. MBR and MSR points of regenerating codes consider the equivalent redundancy of Reed-Solomon code. }
  \label{fig:repair-storage-tradeoff}
\end{figure}

Hybrid approaches, where replicas of fragments and coded fragments are part of the same storage policy have been considered in the literature. 
For example, \cite{Feher2018,Brito2019, hybrid2021} all considered systems that could manage, hot, warm and cold data,
where hybrid approaches were used to manage \emph{warm data}. 
These hybrid schemes considered each fragment of the original file to have a number of replicas while additional erasure coded fragments using random linear network coding~\cite{Ho2006} or Reed-Solomon codes where added to reduce storage costs. 
These schemes also allow for the system to support smoother transitions between the three data states.
The work in~\cite{ZHOU2023} considered its application to two-layer wireless heterogeneous storage networks considering access frequency and repair bandwidth as key focus points.
In contrast to previous proposals, DR-MDS~\cite{DR-MDS2019} provided a scheme that created a Maximum Distance Separable (MDS) code and replicated every piece, including the erasure coded ones. 
This provided an advantage for repair costs, particularly, if few losses occur at a given time.
However, a mathematical characterization and understanding of hybrid schemes, their potential, and fair comparison to erasure coded and replication-based storage systems is still missing.



This paper's goal is two-fold. 
First, we propose a new family of hybrid schemes, called HyRES, to trade-off storage and repair costs by combining replication, erasure coding, and concepts of local repairability inspired in part by~\cite{XORBAS2013}. 
Second, we provide a mathematical framework to compare erasure coded, replication and hybrid schemes considering the effect of network size and occurrence of loss events in the network.
The framework allows for the use of various models for loss probability of the nodes and the special case of i.i.d loss probability per node is further analysed.
Our simulation results and used to validate the theoretical models\footnote{Note that Fig.~\ref{fig:repair-storage-tradeoff} depicts the MBR and MSR schemes for a specific configuration. These are chosen to capture an equivalent worst case node loss protection granted by the hybrid scheme. }.
They show that hybrid schemes can reduce $30$\% or more the overall storage cost of the system while only increasing $13$\% of network use with respect pure data replication when studied in isolation.
Additionally, these schemes are shown to have a much lower loss probability for the entire file as either replication or erasure coding with comparable conditions.

The paper is organized as follows.
We describe the system model in Section~\ref{sec:systemmodel}.
Section \ref{sec:HYRES} introduces HyRES and characterizes its performance, while Section~\ref{sec:analysis} provides mathematical analysis to characterize the behaviour of replication and erasure coding techniques to provide a fair comparison to HyRES.
This analysis incorporates the size of the network in the repair cost of the various systems.
We validate our mathematical analysis with simulations in Section~\ref{sec:evaluation}. 
Section~\ref{sec:conclusions} summarizes our findings.


\section{System Model}\label{sec:systemmodel}

In this work, we will make some key assumptions for our system model. 
We aim to compare different distributed storage schemes fairly by applying equivalent considerations to each one.
This means that our system model will not be limited only to the network or node loss model, but also to aspects of file fragmentation and storage allocation policies for those fragments.
We breakdown these aspects in the following.

\textbf{Network:} We consider a distributed storage system of $N$ nodes. 
Each node can communicate to all others. 
As our focus is on the storage and repair policies, we do not estipulate where the workloads for ingestion of files or redundancy generation are carried out. 

\textbf{File Fragmentation and Redundacy:} Each file is stored by first dividing it into  $k$ equally-sized fragments \footnote{Zero-padding is used in case the the original data size is not divisible by $k$}.
Splitting files into fragments increases read and write speeds due to parallelization and greatly increases flexibility of placement and handling for very large files.
Depending on the storage policy, the fragments may be replicated $R$ times, erasure coded, or a combination of the two.
This generates a total of $n$ fragments per file, where $n>k$ for all cases of interest for reliable storage.

\textbf{Storage Policy:} Each node stores fragments, coded or not, of multiple files.
However, we assume no node stores more than one fragment of the $n$ corresponding to each file.
Thus, $N \geq n$ nodes are required to satisfy this condition.

\textbf{Node Loss:} We assume nodes have the same probability of loss when a loss event occurs (e.g. due to disk failures or individual node failures). 
This model can be used to consider time varying and time-invariant loss probability models. 
We provide analytical results for the latter for simplicity.

\section{HyRES - Hybrid Redundancy Scheme} \label{sec:HYRES}
\begin{figure}[t]
  \centerline{\includegraphics[width=\columnwidth]{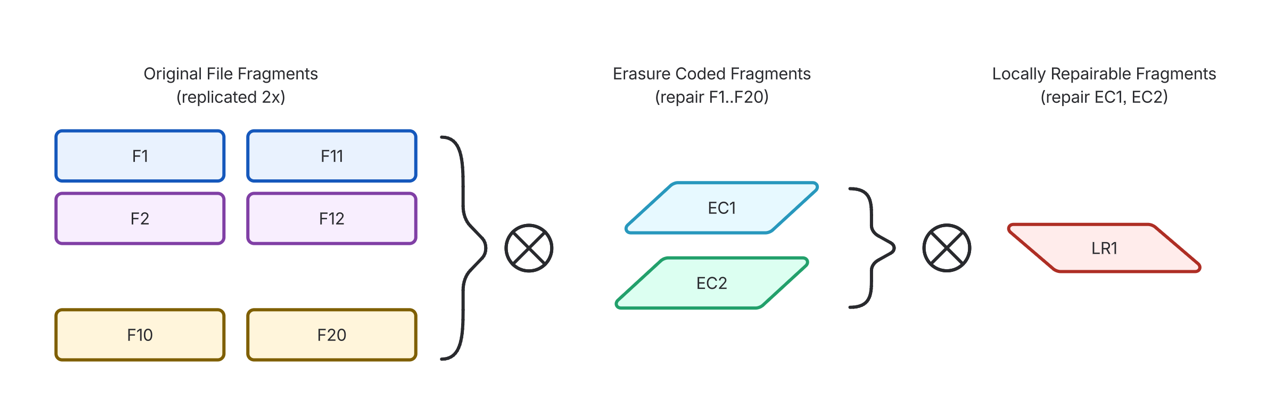}}
  \caption{ Example of HyRES $(2,2,1,10)$ illustrating two replicas of the original fragments, two MDS erasure coded fragments, and a locally repairable linear combination of the two MDS erasure coded fragments}
  \label{fig:HyRES_Example}
\end{figure}
This section proposes HyRES, a hybrid scheme that combines three core ideas on files that are split into equal sized fragments. 
First, replicating the fragments across nodes, which allows fast and zero-compute recovery when any of them are lost.
Second, the use of MDS erasure codes to generate additional coded fragments that can be leveraged if all replicas of one or several fragments are lost.
Third, the use of local repairability for the erasure coded fragments, in order to reduce the amount of traffic generated when losing one of them. 
If two or more coded fragments are lost, this will require the system to revert to decoding missing fragments prior to repairing.
In the following, we define our hybrid scheme and analyze its key characteristics. 
We also discuss how HyRes generalizes and encompasses previous schemes proposed in the state of the art.

 \begin{definition} \label{def:hybrid_scheme}
A HyRES $(R,e,l,k)$ scheme divides each file in $k \geq 2$ fragments to create $R\geq 2$ replicas in total of each fragment. 
It also generates $e\geq 2$ erasure coded fragments generated using an MDS code that takes as input the original $k$ fragments. 
There are also $l$ coded fragments that are linear combinations of $l$ disjoint subsets of the $e$ erasure coded fragments. 
Special considerations are taken for the following corner cases (a) $l=0$ where no additional coded fragments are generated beyond the original $e$; (b) $l=1$, where the newly generated fragment is setup to produce an MDS code with respect to the original $k$ fragments when removing any of the $e$ MDS coded fragments; and (c)
$l=e$ which produces $(R-1)$ direct copies of each of the $e$ MDS erasure coded fragments.
\end{definition}

The total number of fragments generated from the original $k$ fragments is then $R \cdot k + e + l$ fragments.
Our focus on the cases of $l=0,1, e$. Prior work in~\cite{Feher2018,Brito2019, hybrid2021} corresponds to $l=0$, while $l=e$ represents a generalization of the DR-MDS scheme. 
Finally, the $l=1$ case provides a simple and efficient approach for local repairs while keeping good recovery performance.
Figure~\ref{fig:HyRES_Example} provides an example of HyRES $(2,2,1,10)$.

\textbf{Worst Case Loss Performance:} Lemmas~\ref{thm:hybrid_worst_case_loss} and~\ref{thm:hybrid_worst_case_loss_l_e} discuss the worst case loss scenario of the scheme for $l=0,1$ and $l=e$, respectively.

\begin{lemma} \label{thm:hybrid_worst_case_loss}
The worst case number of lost fragments to cause file loss in a HyRES $(R,e,1,k)$ and HyRES $(R,e+1,0,k)$ schemes is $R + e + 1$, i.e., no loss occurs if fewer than $R+e$ fragments are lost.
This worst case loss occurs when $R$ copies of a single fragment are lost and all $e+1$ coded fragments are also lost.
\end{lemma}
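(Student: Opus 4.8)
The plan is to prove the two halves separately and to track throughout a single quantity: the number $s$ of original fragments all of whose $R$ replicas are lost. Any surviving replica of original $i$ yields that fragment for free, so the file decodes exactly when the surviving coded fragments together with the $k-s$ directly available originals span the $k$-dimensional message space — equivalently, when one can exhibit $k$ surviving logical fragments with linearly independent coefficient vectors. I will show (i) that the stated configuration of $R+e+1$ losses destroys the file, and (ii) that no configuration of at most $R+e$ losses can, which pins the worst case exactly at $R+e+1$.

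Part (i) is immediate. In both HyRES$(R,e,1,k)$ and HyRES$(R,e+1,0,k)$ there are exactly $e+1$ coded fragments, so losing all $R$ replicas of one original and all $e+1$ coded fragments totals $R+e+1$ losses and leaves only $k-1$ originals and no coded fragment. Since every encoder output is a linear combination of the $k$ originals, $k-1$ fragments cannot span the message space and the file is lost, realizing the worst case.

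For part (ii) I would argue by a counting bound on the surviving parities. If at most $R+e$ fragments are lost and $s$ originals are fully lost, the replica losses alone consume at least $Rs$ of the budget, so at most $R+e-Rs$ coded fragments are lost. For HyRES$(R,e+1,0,k)$ all $e+1$ coded fragments are MDS parities, hence the number surviving is at least $(e+1)-(R+e-Rs)=1+R(s-1)\ge s$, the inequality being $(R-1)(s-1)\ge0$. Taking any $s$ survivors together with the $k-s$ surviving originals yields $k$ fragments drawn from the $(k+e+1,k)$ MDS code, which are therefore linearly independent, and the file decodes; the case $s=0$ is trivial.

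The only real obstacle is HyRES$(R,e,1,k)$, where the local-repair fragment is a linear combination of the $e$ genuine parities and adds no independent decoding direction. The same count now gives at least $R(s-1)$ surviving genuine parities, and since $R\ge2$ this is $\ge s$ whenever $s\ge2$; in those cases the same MDS argument (now on the $(k+e,k)$ code formed by the originals and the $e$ parities) applies verbatim and never touches the extra fragment. If $s=1$ and even one genuine parity survives, that parity plus the $k-1$ originals already decode. The single residual situation is $s=1$ with all $e$ parities lost, which the budget forces to be exactly $R+e$ losses with the local-repair fragment surviving, so the one missing original must be recovered from that fragment alone. This is where I expect the main work: I would show that the design property of Definition~\ref{def:hybrid_scheme}(b) — that removing any single parity still yields an MDS code together with the extra — forces the extra fragment to have a nonzero coefficient on \emph{every} original. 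Indeed, were its coefficient on some original $i^\ast$ zero, the extra's coefficient vector would lie in the span of the remaining $k-1$ standard vectors and the purported MDS property would fail; hence full support holds, the extra supplies the missing coordinate, and the file decodes, completing the bound.
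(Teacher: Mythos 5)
Your proof is correct, and it is in fact more complete than the paper's own proof sketch, though the underlying idea is the same: a coded fragment costs one loss to destroy while an original costs $R$, so the cheapest fatal pattern takes all $e+1$ coded fragments plus one full set of replicas. Where you differ is in execution. The paper argues informally about which patterns of exactly $R+e+1$ losses are fatal and asserts that "equivalent" losses of coded fragments via replicas are more expensive; you instead prove the clean contrapositive --- any pattern of at most $R+e$ losses is survivable --- by lower-bounding the number of surviving parities by the number $s$ of fully-lost originals and invoking the MDS property. This buys you two things the sketch does not deliver: (1) a rigorous statement that \emph{every} sub-threshold pattern is recoverable, not just a comparison among threshold patterns; and (2) an explicit treatment of the genuine corner case for $l=1$, namely $s=1$ with all $e$ true parities lost and only the local-repair fragment surviving. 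There the file survives only because Definition~\ref{def:hybrid_scheme}(b) forces the local-repair fragment to have nonzero coefficient on every original (otherwise the claimed MDS property after removing a parity would fail on the $k$ columns consisting of the local fragment and the other $k-1$ identity columns); the paper's sketch never checks this, and it is precisely the point where the lemma could fail for a badly chosen local combination. One cosmetic note: the lemma's phrase ``no loss occurs if fewer than $R+e$ fragments are lost'' is weaker than what both you and the first sentence of the lemma establish; your argument correctly shows no loss occurs with up to $R+e$ lost fragments.
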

\begin{proof} [Proof Sketch] 
It can be verified that the worst case scenario described results in the loss of the file as $R \geq 2$ and $e\geq 2$ by Definition~\ref{def:hybrid_scheme}.
For any other loss pattern having $R+e+1$ losses, this requires that one or more of the coded fragments is not lost.
If $L \in {1,2,...e}$ coded fragments remain, this requires that $R \cdot L$ or more replicated fragments are lost in order to generate an equivalent loss of $L$ erasure coded fragments.
This requires a larger number of losses by Definition~\ref{def:hybrid_scheme}.
For the case of $l=1$, if all erasure coded fragments are not loss, given that one of them is linearly dependent of the other $e$, the system requires at least $R \cdot e$ fragment losses to cause an equivalent loss of the $e+1$ coded fragments.  
But $e + 1 \leq R \cdot e$ for the conditions given in Definition~\ref{def:hybrid_scheme}.
For the case of $l=0$, the system would require at least $R \cdot (e+1)$ fragment losses for the equivalent loss of the same $e+1$.
Using the same argument as for $l=1$, we see that the described loss pattern requires the fewest losses to cause a loss of the file.
\end{proof}

\begin{lemma} \label{thm:hybrid_worst_case_loss_l_e}
The worst case number of lost fragments to cause file loss in a HyRES $(R,e,e,k)$ schemes is $R \cdot (e+1)$.
This worst case loss occurs when $R$ copies of any combination of $e+1$ coded or original fragments are lost.
\end{lemma}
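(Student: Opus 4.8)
The plan is to reduce the scheme to a clean combinatorial reliability problem by exploiting the structure exposed in Definition~\ref{def:hybrid_scheme} for the corner case $l=e$. In that case the $l$ locally-combined fragments are direct copies of the individual MDS coded fragments, so the scheme is equivalent to a systematic $(k+e,k)$ MDS code with $k+e$ codeword symbols---the $k$ original fragments together with the $e$ MDS coded fragments---where each codeword symbol is stored as $R$ identical physical copies (the original piece plus $R-1$ copies). First I would record the two structural facts that drive everything: (i) any $k$ distinct codeword symbols suffice to reconstruct the file, by the MDS property; and (ii) a given logical symbol remains available if and only if at least one of its $R$ copies survives, since any surviving copy restores it at zero cost.

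Next I would translate file loss into a statement about symbols rather than fragments. Call a logical symbol \emph{destroyed} when all $R$ of its copies are lost. By facts (i) and (ii), the file is recoverable precisely when at least $k$ of the $k+e$ symbols survive, i.e. when at most $e$ symbols are destroyed; equivalently, file loss occurs if and only if at least $e+1$ of the $k+e$ symbols are destroyed. This is the key reformulation: the reliability of HyRES $(R,e,e,k)$ depends only on how many distinct logical symbols are completely eliminated, not on which particular physical fragments are lost.

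With this in hand the bound is a counting argument. Destroying any $t$ symbols requires losing all of their copies, hence at least $Rt$ fragment losses. Therefore, if strictly fewer than $R(e+1)$ fragments are lost, at most $e$ symbols can be destroyed, leaving at least $k$ surviving symbols, and the file is recovered by (i). For tightness I would exhibit the pattern named in the statement: choose any $e+1$ of the $k+e$ symbols and delete all $R$ of their copies, spending exactly $R(e+1)$ losses and destroying $e+1$ symbols. Only $k-1$ symbols then survive, and since $k-1<k$ the MDS code admits no unique decoding, so the file is lost. Combining the two directions shows $R(e+1)$ is exactly the smallest number of losses capable of causing file loss, realized by the described configuration for any choice of the $e+1$ destroyed symbols.

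The step I expect to require the most care is justifying the equivalence to a single $(k+e,k)$ MDS code in fact (i): one must argue that the $R-1$ extra copies introduced by $l=e$ add no linear-algebraic rank beyond their originals, so that the surviving \emph{distinct} symbols, and not the surviving fragment count, determine recoverability, and that the underlying code is genuinely MDS so that exactly $k$ symbols are both necessary and sufficient. Given Definition~\ref{def:hybrid_scheme}, which stipulates an MDS code on the $k$ original fragments and pure replication of every piece when $l=e$, both properties hold, and the remainder is the elementary counting above.
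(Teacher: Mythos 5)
Your proof is correct and follows essentially the same route as the paper's (much terser) proof sketch: both rest on the observation that under $l=e$ every one of the $k+e$ logical symbols is $R$-way replicated, so coded and original fragments are interchangeable, file loss occurs exactly when $e+1$ symbols are fully destroyed, and destroying $t$ symbols costs at least $Rt$ fragment losses. Your version simply makes the counting and the MDS threshold argument explicit where the paper leaves them implicit.
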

\begin{proof} [Proof Sketch] 
  As coded fragments as also copied, there is no inherent difference/advantage between losing $R$ copies of a coded or uncoded fragment. 
  In contrast to the proof of Lemma~\ref{thm:hybrid_worst_case_loss}, there is no benefit from keeping additional coded fragments beyond the original $e$. 
  Thus, any combination that results in a complete loss of $e+1$ fragments coded or original results in a loss.  
\end{proof}

\textit{Remark:} The DR-MDS~\cite{DR-MDS2019} corresponds to a HyRES $(2,e,e,k)$ scheme. 
When compared to Hybrid schemes with the same number of overall coded fragments, such as HyRES $(2,2e-1,1,k)$ or HyRES $(2,2e,0,k)$, we see that the worst case loss is $2e +2$ in all cases with the same number of total coded fragments, namely, $2e$.
However, there are far more loss patterns that cause this worst case scenario due to the fact that the replicating the erasure coded fragments generates a less robust.
In fact, there are $k$ loss patterns for HyRES $(2,2e-1,1,k)$ or HyRES $(2,2e,0,k)$ that cause the worst case loss, while there are $\binom{k+e}{e+1}$ patterns for DR-MDS.

\textit{Example:} For the case $k=10$, $e=2$ there will be $220$ loss patterns of the worst case while the other schemes would have $10$ loss patterns creating the worst case.

\textbf{Repair Costs:} In the event of losses, if the file is recoverable, HyRES will prioritize transmissions that requires fewest fragments to be accessed for repairing each fragment.
We define these repair costs in terms of the fraction of the size of the original file that needs to be transmitted by the system.
For a given number of lost fragments, we will consider the mean repair costs under the node loss assumption of Section~\ref{sec:systemmodel}.

\begin{lemma} \label{thm:hybrid_repair_cost_one_loss}
The mean repair cost of HyRES for a single fragment loss as a fraction of the original file is (a) $1/k$ for HyRES $(R,e,e,k)$, 
(b) $\frac{R + (e+1) \cdot e/k}{ R \cdot k + e + 1}$ for HyRES $(R,e,1,k)$, and 
(c) $\frac{R + e + 1}{ R \cdot k + e+1}$ for HyRES $(R,e+1,0,k)$.
\end{lemma}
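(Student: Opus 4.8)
The plan is to treat all three cases by a single accounting principle. Since the node-loss model of Section~\ref{sec:systemmodel} assigns every node, hence every stored fragment, the same loss probability, conditioning on exactly one loss makes the lost fragment uniform over the $n$ fragments of the scheme. Normalising the file size to $1$, every original and every MDS coded fragment has size $1/k$, so if $m_f$ denotes the minimum number of fragments that must be transmitted to repair fragment $f$ (the scheme prioritises the fewest accesses), the repair cost of $f$ is $m_f/k$ and the mean repair cost is $\frac{1}{n}\sum_f m_f/k$. The entire argument then reduces to determining $m_f$ for each fragment type and counting how many fragments of each type exist.

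First I would classify the fragments. In all three cases there are $R\cdot k$ replicated original fragments, each repairable by copying one surviving replica, giving $m_f=1$ and cost $1/k$; these contribute $R\cdot k\cdot(1/k)=R$ to the numerator. The cases then differ only in how the coded fragments are repaired. For HyRES $(R,e,e,k)$ (case (a)) the $l=e$ corner case of Definition~\ref{def:hybrid_scheme} stores a replica of every coded fragment as well, so every fragment has $m_f=1$; hence the mean is exactly $1/k$ and no counting is needed. For HyRES $(R,e+1,0,k)$ (case (c)) there is no local structure, so a lost coded fragment can only be regenerated by first decoding the file from any $k$ available fragments, giving $m_f=k$ and cost $1$; with $e+1$ coded fragments this adds $e+1$ to the numerator, and dividing by $n=R\cdot k+e+1$ yields the stated expression.

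The substantive case is (b), HyRES $(R,e,1,k)$, where I would argue that the single local fragment $p$, which is a linear combination of all $e$ MDS coded fragments (per the $l=1$ corner case, set up to remain MDS with respect to the original $k$ fragments whenever one of the $e$ is removed), lets any one lost coded fragment $c_i$ be recovered from $p$ together with the remaining $e-1$ coded fragments, i.e. $m_{c_i}=e$, and symmetrically that $p$ itself is recovered from the $e$ coded fragments, so $m_p=e$ as well. The key point to verify is that accessing $e$ fragments is genuinely the cheapest option: copying is impossible since these fragments are unreplicated, and the only alternative, a full decode, requires accessing $k\ge e$ fragments, so the local route is optimal in the regime of interest. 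Each of the $e+1$ coded fragments therefore contributes $e/k$, adding $(e+1)\,e/k$ to the numerator, and dividing by $n=R\cdot k+e+1$ gives $\frac{R+(e+1)e/k}{R\cdot k+e+1}$.

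The hard part will be pinning down $m_f$ for the coded fragments rigorously, in particular confirming that $p$ and $e-1$ coded fragments really do determine the missing $c_i$ (which relies on the nonzero coefficient of $c_i$ in $p$ guaranteed by the MDS construction) and that no shorter set of accesses exists. Everything else is a routine tally of fragment counts followed by the division by $n$.
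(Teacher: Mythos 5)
The paper states this lemma without giving a proof, so there is nothing to compare against directly; your proposal supplies exactly the argument the stated formulas encode (the denominators $R\cdot k+e+1$ are the total fragment counts $n$, and the numerators are the per-fragment transmission costs summed over all $n$ fragments), so it is clearly the intended derivation and your case analysis is correct. The only caveat, which you already flag, is that case (b) implicitly assumes $e\le k$ so that the local repair via the $e$ surviving coded fragments is indeed cheaper than a full decode; the lemma does not state this hypothesis, but it holds in every configuration the paper considers.
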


\textbf{File Loss Probability:} In the event of $L$ losses in a network of $N$ nodes, an individual file may be lost with a given probability depending on whether the nodes lost correspond to nodes storing a fragment the HyRES protected file and 
on whether the overall loss experienced by those fragments renders the file unrecoverable. 
Theorems~\ref{thm:hyres_loss_prob_l_1},~\ref{thm:hyres_loss_prob_l_0} and~\ref{thm:hyres_loss_prob_l_e} provide the probability of file loss conditioned on $L$ losses on the network for the key cases of interest in HyRES. 
This conditional probability can be used with various distributions for loss probability with the assumption that the placement of fragments and node losses are independent. 
In the following, $ \mathds{1}_{A}$ represents the indicator function, which is one for condition $A$.

  \begin{theorem} \label{thm:hyres_loss_prob_l_1}
    For the network described in Section~\ref{sec:systemmodel}, the probability of losing a file using a HyRES $(R,e-1,1,k)$, an event called $F_{l=1}$, when $L$ nodes are lost is
    \begin{eqnarray}
      &&{\cal P}_{HyRES,l=1} (R,e, k | L) = \mathds{P}(F_{l=1} | {\cal L} = L; R, e, k) \nonumber \\
      &&= 1 - \frac{\sum_{ (x_1,..,x_k, y_e, y_l,z) \in S_{F_{l=1}} } \binom{N}{x_1, ..., x_k, y_e, y_l,z } }{  \binom{N}{L}} 
    \end{eqnarray}
    where $ x_i \in \{0, ..,R\} \forall i \in \{1, ...,k\}$, $y_e \in \{0, .., e-1\}$, $y_l \in \{0,1\}$, $z \in \{0, ..., N - Rk - e\}$, and 
    $S_{F_{l=1}} = C_0 \cup C_1 $ identifies all patterns of $L$ node losses that do not result in file loss where
    $C_0 = \{ (x_1,...,x_k,y_e, y_l, z) | \sum_{i=1}^k \mathds{1}_{\{ x_i = 0 \}} \leq y_e + y_l < e, \sum_{i=1}^{k} x_i + y_e + y_l + z = N - L \} $
    and
    $C_1 = \{ (x_1,...,x_k,y_e, y_l, z) | \sum_{i=1}^k \mathds{1}_{\{ x_i = 0 \}} < y_e + y_l = e, \sum_{i=1}^{k} x_i + y_e + y_l + z= N - L \} $.
  \end{theorem}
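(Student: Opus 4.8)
The plan is to compute the complementary probability $\mathds{P}(\overline{F_{l=1}} \mid {\cal L} = L)$ that the file \emph{survives}, and then return $1$ minus this quantity. Because node losses are exchangeable (Section~\ref{sec:systemmodel}), every subset of $L$ lost nodes is equally likely, so it suffices to count loss patterns. I would first partition the $N$ nodes according to the fragment they hold: $k$ groups of $R$ nodes each (the replicas of the originals), one group of $e-1$ nodes (the MDS fragments of the HyRES$(R,e-1,1,k)$ scheme), a singleton group (the local fragment), and the remaining $N-Rk-e$ nodes that carry no fragment of this file. Writing $x_i,y_e,y_l,z$ for the number of \emph{surviving} nodes in each group, a survivor profile is realized by $\prod_{i=1}^{k}\binom{R}{x_i}\binom{e-1}{y_e}\binom{1}{y_l}\binom{N-Rk-e}{z}$ node subsets, which is the term $\binom{N}{x_1,\dots,x_k,y_e,y_l,z}$ appearing in the statement, and the constraint $\sum_i x_i+y_e+y_l+z=N-L$ simply records that $L$ nodes are lost. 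Summing this product over \emph{all} admissible profiles returns $\binom{N}{N-L}=\binom{N}{L}$ by the Vandermonde convolution, which both fixes the denominator and confirms normalization.

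The crux is to identify precisely which profiles avoid file loss, i.e.\ to pin down the set $S_{F_{l=1}}$. Let $m=\sum_{i=1}^{k}\mathds{1}_{\{x_i=0\}}$ be the number of original fragments whose every replica is gone; these are exactly the symbols that must be reconstructed from coded fragments. The MDS property in Definition~\ref{def:hybrid_scheme} guarantees that the $k$ originals together with the $e-1$ MDS fragments form a $(k+e-1,k)$ MDS code, so any $r$ linearly independent surviving coded fragments recover \emph{any} $m\le r$ missing originals, while $m>r$ is unrecoverable. Thus the file survives if and only if $m\le r$, where $r$ is the rank of the $y_e$ surviving MDS fragments together with the $y_l$ surviving local fragment.

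Computing $r$ is the main obstacle, and it is where the split between $C_0$ and $C_1$ originates. The local fragment is a combination $\sum_{j=1}^{e-1}\alpha_j c_j$ of all $e-1$ MDS fragments; the requirement in Definition~\ref{def:hybrid_scheme} that removing \emph{any} single MDS fragment still yields an MDS code forces every coefficient $\alpha_j\neq 0$ (otherwise dropping a fragment on which the local has zero support would leave only $e-2$ independent parities, too few for the MDS guarantee). Consequently the surviving local fragment lies in the span of the surviving MDS fragments \emph{only} when all $e-1$ of them survive, giving $r=\min(y_e+y_l,\,e-1)$: when $y_e+y_l<e$ the coded survivors are independent and $r=y_e+y_l$, whereas when $y_e+y_l=e$ (all MDS plus the local) the local is redundant and $r=e-1=y_e+y_l-1$. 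Translating $m\le r$ through these two regimes yields exactly $C_0$ (the condition $m\le y_e+y_l<e$) and $C_1$ (the condition $m<y_e+y_l=e$), so $S_{F_{l=1}}=C_0\cup C_1$ enumerates the surviving profiles.

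Finally I would assemble the pieces: dividing the sum of $\binom{N}{x_1,\dots,x_k,y_e,y_l,z}$ over $S_{F_{l=1}}$ by $\binom{N}{L}$ gives $\mathds{P}(\overline{F_{l=1}}\mid{\cal L}=L)$, and the stated probability of file loss is its complement. I expect the counting and the Vandermonde normalization to be routine; the delicate point is justifying the exact rank formula $r=\min(y_e+y_l,e-1)$ and, in particular, the strict inequality in $C_1$, which hinges on the genericity forced by the MDS construction rather than on any specific choice of coding coefficients.
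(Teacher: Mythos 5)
Your proposal is correct and follows the same combinatorial route the paper intends: partition the $N$ nodes into the $k$ replica groups, the $e-1$ MDS fragments, the local fragment, and the $N-Rk-e$ uninvolved nodes; count survivor profiles (with the paper's $\binom{N}{x_1,\dots,x_k,y_e,y_l,z}$ read, as you do, as the product of per-group binomials so that Vandermonde yields the $\binom{N}{L}$ normalization); and characterize the survivable profiles as $C_0\cup C_1$. The paper only states this as a one-sentence sketch naming the variables, so your added justification of the rank formula $r=\min(y_e+y_l,e-1)$ via the all-nonzero coefficients forced by case (b) of Definition~\ref{def:hybrid_scheme} is exactly the missing detail, correctly supplied.
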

The proof follows a combinatorial argument, where the $x_i$ factors represent the number of remaining replicas of the $i$-th original fragment, $y_e$ represents the remaining MDS erasure coded fragments, $y_l = 1$ indicates the presence of the additional coded fragment for local repair, and $z$ indicates the number of remaining nodes in the network no storing the file. 
Using similar considerations, the Theorems~\ref{thm:hyres_loss_prob_l_0} and~\ref{thm:hyres_loss_prob_l_e} describe the file loss probability for $l=0$ and $l=e$.

\begin{theorem} \label{thm:hyres_loss_prob_l_0}
    For the network described in Section~\ref{sec:systemmodel}, the probability of the event $F_{l=0}$ of losing a file using a HyRES $(R,e,0,k)$ given $L$ node losses is
    \begin{eqnarray}
      &&{\cal P}_{HyRES,l=0} (R,e, k | L) = \mathds{P}(F_{l=0} | {\cal L} = L; R, e, k) \nonumber \\
      &&= 1 - \frac{\sum_{ (x_1,..,x_k, y_e,z) \in S_{F_{l=0}} } \binom{N}{x_1, ..., x_k, y_e,z } }{  \binom{N}{L}}  
    \end{eqnarray}
    where $x_i \in \{0, ..,R\} \forall i \in \{1, ...,k\}$, $y_e \in \{0, .., e\}$, $z \in \{0, ..., N - Rk - e\}$, 
    and $S_{F_{l=0}} = \{ (x_1,...,x_k,y_e,z) | \sum_{i=1}^k \mathds{1}_{\{ x_i = 0 \}} \leq  y_e, \sum_{i=1}^{k} x_i + y_e + z = N - L \} $.
  \end{theorem}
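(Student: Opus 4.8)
The plan is to compute the file-loss probability by complementary counting: condition on the profile of which fragments survive the $L$ node losses, count the profiles that still permit reconstruction, and divide by the total number of loss patterns. Because the model of Section~\ref{sec:systemmodel} assigns every node the same loss probability and we condition on exactly ${\cal L}=L$ losses, every size-$L$ subset of the $N$ nodes is equally likely; hence there are exactly $\binom{N}{L}$ equiprobable loss patterns, which supplies the denominator. It therefore suffices to count the loss patterns for which $F_{l=0}$ does \emph{not} occur and subtract the resulting fraction from one.

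The crux is the recovery characterization. Since the $e$ coded fragments are produced by an MDS code on the $k$ original fragments, the $k+e$ symbols $(d_1,\dots,d_k,p_1,\dots,p_e)$ form a $(k+e,k)$ MDS codeword, so any $k$ distinct symbols among them determine the file. In a HyRES $(R,e,0,k)$ scheme only the original fragments are replicated, so all $R$ replicas of $d_i$ carry identical information and count as a single recoverable symbol, available precisely when $x_i \geq 1$. Writing $m=\sum_{i=1}^k \mathds{1}_{\{x_i=0\}}$ for the number of original fragments whose replicas are all lost, the number of distinct MDS symbols still available is $(k-m)+y_e$. I would argue sufficiency directly from the MDS property (when $(k-m)+y_e\geq k$, i.e.\ $m\leq y_e$, pick any $k$ available symbols and decode) and necessity by dimension counting (replicas add no independent equations, so with $(k-m)+y_e<k$ the $k$ message unknowns are underdetermined and no code can recover them). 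This establishes that the file survives iff $\sum_{i=1}^k \mathds{1}_{\{x_i=0\}} \leq y_e$, which is exactly the defining constraint of $S_{F_{l=0}}$.

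For the counting step, for a fixed survival profile $(x_1,\dots,x_k,y_e,z)$ the number of loss patterns realizing it is obtained by choosing independently which replicas of each fragment, which coded fragments, and which non-file nodes survive, namely $\prod_{i=1}^k\binom{R}{x_i}\binom{e}{y_e}\binom{N-Rk-e}{z}$ (this is what the multinomial-style symbol in the statement abbreviates). Summing this product over the profiles in $S_{F_{l=0}}$, subject to the node-count constraint $\sum_{i=1}^{k} x_i + y_e + z = N-L$, yields the number of non-loss patterns; dividing by $\binom{N}{L}$ and subtracting from one gives the claim. As a consistency check I would verify via the Vandermonde convolution that summing the same product over \emph{all} profiles with $\sum_{i=1}^{k} x_i + y_e + z = N-L$ collapses to $\binom{Rk+e+(N-Rk-e)}{N-L}=\binom{N}{L}$, confirming that the counts partition the loss space and that the expression is a genuine probability.

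The main obstacle I anticipate is the recovery characterization rather than the bookkeeping: one must argue carefully that replicas of the same original fragment collapse to a single information symbol (so they cannot substitute for one another in the MDS count) and that each surviving coded fragment can cover at most one fully-lost original fragment, giving the tight threshold $m\leq y_e$. Once this equivalence between ``file recoverable'' and the inequality defining $S_{F_{l=0}}$ is pinned down, the remaining summation is routine binomial counting, and the $l=0$ case is in fact the cleanest of the three because no coded fragment is replicated.
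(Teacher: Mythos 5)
Your proof is correct and takes essentially the same approach as the paper, which only offers a one-sentence sketch (a combinatorial count over surviving-fragment profiles $(x_1,\dots,x_k,y_e,z)$, with the non-loss profiles identified by the MDS condition $\sum_i \mathds{1}_{\{x_i=0\}} \leq y_e$); you simply supply the details the paper omits. Your reading of the symbol $\binom{N}{x_1,\dots,x_k,y_e,z}$ as the product $\prod_{i=1}^k\binom{R}{x_i}\binom{e}{y_e}\binom{N-Rk-e}{z}$ is the interpretation under which the formula is a genuine probability, as your Vandermonde consistency check confirms.
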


 \begin{theorem} \label{thm:hyres_loss_prob_l_e}
    For the network described in Section~\ref{sec:systemmodel}, the probability of the event $F_{l=e}$ of losing a file using a HyRES $(R,e,e,k)$ given $L$ node losses is
    \begin{eqnarray}
      &&{\cal P}_{HyRES,l=e} (R,e, k | L) = \mathds{P}(F_{l=e} | {\cal L} = L; R, e, k) \nonumber \\
      &&= 1 - \frac{\sum_{ (x_1,..,x_k, y_1, ..., y_e,z) \in S_{F_{l=e}} } \binom{N}{x_1, ..., x_k, y_1, ..., y_e,z } }{ \binom{N}{L}} 
    \end{eqnarray}
    where $x_i \in \{0, ..,R\} \forall i \in \{1, ...,k\}$, $y_j \in \{0, .., R\}  \forall j = \{1, ...,e\}$, $z \in \{0, ..., N - Rk - Re\}$,
    and $S_{F_{l=e}} = \{ (x_1,...,x_k,y_1,...,y_e,z) | \sum_{i=1}^k \mathds{1}_{\{ x_i = 0 \}} \leq  \sum_{j=1}^e \mathds{1}_{\{ y_j = 0 \}} , \sum_{i=1}^{k} x_i + \sum_{j=1}^{k} y_j + z = N - L \} $
  \end{theorem}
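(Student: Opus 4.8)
The plan is to reduce the event of file survival to a purely combinatorial condition on the vector of surviving fragments, count the favorable node-loss patterns exactly, and then take complements. First, I would characterize recoverability. In a HyRES $(R,e,e,k)$ scheme the $k$ original and $e$ MDS-coded fragments together form the $k+e$ distinct symbols of a $(k+e,k)$ MDS code, and by Definition~\ref{def:hybrid_scheme} each of these $k+e$ symbols is stored with $R$ identical copies. A distinct symbol is \emph{available} precisely when at least one of its $R$ copies survives, i.e.\ when its survivor count ($x_i$ for an original, $y_j$ for a coded symbol) is nonzero. Because any $k$ coordinates of an MDS codeword determine the message while any $k-1$ or fewer leave it underdetermined, the file is recoverable iff at least $k$ of the $k+e$ distinct symbols are available, equivalently iff the number of fully-lost distinct symbols is at most $e$. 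Writing $a=\sum_{i=1}^k\mathds{1}_{\{x_i=0\}}$ and $b=\sum_{j=1}^e\mathds{1}_{\{y_j=0\}}$ for the fully-lost originals and fully-lost coded symbols, survival holds exactly when $a+b\le e$; equivalently the number of fully-lost originals is at most the number of surviving coded symbols, $a\le e-b=\sum_{j=1}^e\mathds{1}_{\{y_j\neq 0\}}$, which is the membership condition defining $S_{F_{l=e}}$ and parallels the $a\le y_e$ constraint of the $l=0$ case in Theorem~\ref{thm:hyres_loss_prob_l_0}. This is the same equivalence underlying the worst-case count $R(e+1)$ of Lemma~\ref{thm:hybrid_worst_case_loss_l_e}, since losing the file requires $e+1$ fully-dead distinct symbols, each demanding $R$ copies lost.

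Second, I would set up the counting. Conditioned on $\mathcal{L}=L$ and the equal-loss assumption of Section~\ref{sec:systemmodel}, every size-$L$ subset of lost nodes is equally likely, so any survival probability is a ratio of pattern counts over $\binom{N}{L}$. I would partition the $N$ nodes into the $k+e$ replication groups of size $R$ (one per distinct symbol) together with the $N-R(k+e)$ nodes storing no fragment of this file. A surviving configuration is specified by how many survivors fall in each group, namely $(x_1,\dots,x_k,y_1,\dots,y_e,z)$ with $x_i,y_j\in\{0,\dots,R\}$, $z\in\{0,\dots,N-R(k+e)\}$, and $\sum_{i=1}^k x_i+\sum_{j=1}^e y_j+z=N-L$ to enforce exactly $L$ losses. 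The number of loss patterns realizing a prescribed configuration is $\prod_{i=1}^k\binom{R}{x_i}\prod_{j=1}^e\binom{R}{y_j}\binom{N-R(k+e)}{z}$, which is the quantity the statement writes as $\binom{N}{x_1,\dots,x_k,y_1,\dots,y_e,z}$.

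Third, I would assemble the result. Each size-$L$ lost-node subset determines its surviving complement and hence exactly one configuration, so the configurations partition the $\binom{N}{L}$ patterns without overlap; summing the product counts over \emph{all} feasible configurations returns $\binom{N}{L}$ by the Vandermonde convolution (the group sizes total $kR+eR+(N-R(k+e))=N$), which both confirms the denominator and certifies that the enumeration is complete. Restricting the sum to the configurations in $S_{F_{l=e}}$ — those meeting the survival condition $a+b\le e$ together with the size-$L$ constraint — yields $\mathds{P}(\text{survival}\mid\mathcal{L}=L)$, and $\mathds{P}(F_{l=e}\mid\mathcal{L}=L)$ is its complement, giving the stated expression. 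The derivation mirrors Theorem~\ref{thm:hyres_loss_prob_l_0}, with the single coded multiplicity $y_e$ replaced by the $e$ independent multiplicities $y_1,\dots,y_e$ because here every coded symbol is itself $R$-replicated.

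The main obstacle I anticipate is the recoverability characterization rather than the counting. I must argue carefully that replicating the coded symbols makes originals and coded symbols fully symmetric, so that only the number of \emph{distinct} symbols lost entirely matters, and then pin down the linear survival condition as both necessary (fewer than $k$ surviving coordinates leave a rank-deficient system over a field of size at least two, hence multiple consistent messages) and sufficient (the MDS erasure-correction property). The remaining work is bookkeeping: ensuring the ranges of $x_i,y_j,z$ and the sum constraint describe each lost-node subset exactly once, so that the favorable counts and the normalizing $\binom{N}{L}$ are mutually consistent, a consistency the Vandermonde identity makes explicit.
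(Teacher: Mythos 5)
Your proposal is correct and follows essentially the same route as the paper, which gives only a one-line combinatorial sketch (count the surviving-replica configurations per replication group, sum the product of binomials over the recoverable configurations, normalize by $\binom{N}{L}$, and take the complement); your elaboration of the recoverability criterion and the Vandermonde check of the normalization are exactly the missing details. One caveat: your survival condition $a+b\le e$ (fully-lost originals at most the number of \emph{surviving} coded symbols) is the correct one and parallels Theorem~\ref{thm:hyres_loss_prob_l_0}, but the theorem as printed writes $\sum_{i}\mathds{1}_{\{x_i=0\}}\le\sum_{j}\mathds{1}_{\{y_j=0\}}$, i.e.\ it compares against the fully-\emph{lost} coded symbols — you have silently corrected what appears to be a typo ($\mathds{1}_{\{y_j=0\}}$ should be $\mathds{1}_{\{y_j\neq 0\}}$), and you should flag rather than assert that your condition ``is'' the stated membership condition.
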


\section{Performance of Replication and MDS Codes} \label{sec:analysis}
This section analyses the conditional file loss performance for $R$-way replication and MDS codes. 
As in Section~\ref{sec:HYRES}, exact replication results rely on combinatorial arguments, although a simple upper bound is also developed.
For MDS codes, there exists a simple, closed-form solution for the problem.

\subsection{Replication}
In the case of replication, we consider that each fragment is replicated a total of $R$ times.
Let us calculate the probability of irreperably losing a file, i.e., all replicas are lost for one or more fragments. 
Theorem~\ref{thm:replication} provides an exact result, based on a combinatorial argument as in Section~\ref{sec:HYRES}.
 \begin{theorem} \label{thm:replication}
    For the network described in Section~\ref{sec:systemmodel}, the probability of losing a file using $R$ replicas per fragment is
    \begin{eqnarray}
      &&{\cal P}_r (R | L)  = \mathds{P}(F_R | {\cal L} = L; R) \nonumber \\
      &&= 1 - \frac{\sum_{ (x_1,..,x_k,z) \in S_{F} } \binom{N}{x_1, ..., x_k,z } }{ \binom{N}{L}} 
    \end{eqnarray}
    where $x_i \in \{0, ..,R\} \text{ for } i = \{1, ...,k\}, z \in \{0, .., N-Rk\}$, and
     $S_{F} = \{ (x_1,...,x_k,z) | \sum_{i=1}^k \mathds{1}_{\{ x_i = 0 \}} = 0 , \sum_{i=1}^{k} x_i + z = N - L \} $
  \end{theorem}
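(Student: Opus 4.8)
The plan is to compute the complementary probability that the file \emph{survives} the loss of $L$ nodes, using a direct combinatorial count over the uniform sample space of loss patterns. Since the storage policy of Section~\ref{sec:systemmodel} places the $Rk$ replicated fragments on $Rk$ distinct nodes, I would fix this placement and treat the set of $L$ lost nodes as chosen uniformly at random among the $N$ nodes; the sample space then has $\binom{N}{L}$ equally likely outcomes, which supplies the denominator. The key structural observation is that with pure replication there is no decoding step: a fragment is recoverable if and only if at least one of its $R$ replicas survives, and the whole file is recoverable if and only if \emph{every} one of the $k$ fragments retains at least one surviving replica. Thus the survival event is exactly $\{x_i \ge 1 \text{ for all } i\}$, equivalently $\sum_{i=1}^k \mathds{1}_{\{x_i=0\}} = 0$, where $x_i$ denotes the number of surviving replicas of fragment $i$.

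To count the surviving outcomes I would partition the $N$ nodes according to the fixed placement into $k$ ``fragment groups'' of $R$ nodes each (the replicas of fragment $i$) and one ``free group'' of the remaining $N-Rk$ nodes that hold no fragment of this file. Every loss pattern is determined by, and uniquely determines, the survivor counts $(x_1,\dots,x_k,z)$ together with the choice of which nodes survive inside each group; for fixed counts the number of such patterns is $\prod_{i=1}^k \binom{R}{x_i}\binom{N-Rk}{z}$, which is the quantity written $\binom{N}{x_1,\dots,x_k,z}$. Imposing the survival condition $x_i \ge 1$ for all $i$ together with the loss-count bookkeeping $\sum_{i=1}^k x_i + z = N-L$ (so that exactly $L$ nodes fail) restricts the admissible counts to the set $S_F$. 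Summing the per-configuration counts over $S_F$ yields the number of non-loss patterns, and dividing by $\binom{N}{L}$ gives the survival probability; subtracting from $1$ produces the claimed expression for $\mathcal{P}_r(R\mid L)$.

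The main thing to verify carefully is that the map sending a loss pattern to the tuple $(x_1,\dots,x_k,z)$ \emph{with} its in-group survivor choices is a bijection onto the described configurations, so that the groups induce a genuine partition of the event and no pattern is counted twice or omitted; this is precisely where the assumption that each node stores at most one fragment of the file (so the groups are disjoint and exhaust the $Rk$ replica nodes) is essential. A secondary point worth a sentence is that conditioning on a fixed placement loses no generality: by the symmetry between randomizing the placement and randomizing the loss set, the conditional loss probability depends only on the group sizes $R$ and $N-Rk$ and not on which specific nodes are used, so it is identical for every placement and the result holds under the independence assumption stated before the theorem. Compared with the HyRES theorems, this case is the simplest because the recovery criterion is a plain ``one survivor per fragment'' condition with no erasure-decoding interaction, so I expect the only real work to be the bookkeeping just described rather than any genuine combinatorial difficulty.
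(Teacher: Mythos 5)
Your proposal is correct and follows essentially the same route as the paper, which only sketches the argument by identifying $x_i$ as the number of surviving replicas of fragment $i$, $z$ as the surviving nodes holding no fragment, and the numerator as the count of non-loss patterns over the $\binom{N}{L}$ equally likely loss sets. Your reading of the symbol $\binom{N}{x_1,\dots,x_k,z}$ as the per-group product $\prod_{i=1}^k \binom{R}{x_i}\binom{N-Rk}{z}$ is the interpretation under which the stated formula is exact, and your bijection/partition check together with the placement-symmetry remark supplies precisely the detail the paper leaves implicit.
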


For simplicity, we will calculate a simple upperbound. We consider the probability of losing a fragment given $L$ lost nodes in Lemma~\ref{thm:replication_loss_fragment} and use this result to compute an upper bound on the file loss probability in Theorem~\ref{thm:replication_bound}.
\begin{lemma} \label{thm:replication_loss_fragment}
  For a network with $N$ nodes and equal loss probability among nodes, then the probability of the event $A$ of a fragment being irreperably lost when using $R$ replicas, conditioned a number of nodes lost ${\cal L} = L$, is 
  \begin{equation}
    \mathds{P}\left(A | {\cal L} = L; R\right)= \frac{\binom{L}{R}}{\binom{N}{R}}
  \end{equation}
  for $L \geq R$ and zero otherwise.
\end{lemma}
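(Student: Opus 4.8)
The plan is to reduce the statement to a uniform draw of lost nodes followed by elementary counting. First I would use the node-loss assumption of Section~\ref{sec:systemmodel}: since all nodes have the same loss probability, conditioning on the event $\mathcal{L}=L$ makes the set of lost nodes uniformly distributed over all $\binom{N}{L}$ subsets of size $L$. By the storage policy, the $R$ replicas of the fragment live on $R$ \emph{distinct} nodes, so the fragment is irreparably lost precisely when this fixed $R$-element set is contained in the lost set. In particular, if $L<R$ the $R$ replica nodes cannot all fit among the $L$ lost nodes, giving probability zero.

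Next I would count directly. Fixing the $R$ nodes that hold the replicas, a loss pattern destroys the fragment if and only if all $R$ of these nodes are lost, which forces the remaining $L-R$ lost nodes to be chosen from the other $N-R$ nodes. This yields $\binom{N-R}{L-R}$ favorable patterns out of $\binom{N}{L}$ total, so that
$$\mathds{P}\!\left(A \mid \mathcal{L}=L; R\right) = \frac{\binom{N-R}{L-R}}{\binom{N}{L}}$$
for $L \geq R$. Finally I would rewrite this in the stated form using the identity $\binom{N-R}{L-R}/\binom{N}{L} = \binom{L}{R}/\binom{N}{R}$, which follows by expanding both expressions into factorials and simplifying; equivalently, it is the symmetry of the hypergeometric probability under exchanging the roles of the sample size $L$ and the marked set of size $R$.

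The calculation itself is routine, so the only point deserving care is the first step, namely justifying that conditioning on $\mathcal{L}=L$ produces a genuinely uniform distribution over $L$-subsets (and not merely an exchangeable one). Once exchangeability of the nodes is invoked to establish this uniformity, the rest is a single combinatorial count and an algebraic identity, so I expect no substantive obstacle beyond that observation.
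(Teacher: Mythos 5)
Your proof is correct and amounts to the same one-line hypergeometric count as the paper's, just carried out in the dual direction: you fix the $R$ replica nodes and take the lost $L$-set uniform, obtaining $\binom{N-R}{L-R}/\binom{N}{L}$ and then converting via the symmetry identity, whereas the paper counts copysets (randomizing over which $R$-subset holds the replicas against a fixed loss set) and reads off $\binom{L}{R}/\binom{N}{R}$ directly. The two viewpoints are equivalent precisely by the identity you invoke, so there is no substantive difference in content.
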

\begin{proof} [Proof Sketch]
Using the concepts of copysets from~\cite{Copysets2013}, we consider all possible copysets for choosing $R$ replicas from $N$ nodes.
The number of copysets with irreversible failures means choosing $R$ elements from the $L$ lost elements. 
\end{proof}

\begin{theorem} \label{thm:replication_bound}
  For the network described in Section~\ref{sec:systemmodel}, the probability of the event $F_R$ of losing a file using $R$ replicas per fragment is upperbounded by 
  \begin{equation}
    {\cal P}_r (R | L)  = \mathds{P}(F_R | {\cal L} = L; R) \leq k \frac{\binom{L}{R}}{\binom{N}{R}}
    \end{equation}
    for $L \geq R$ and zero otherwise.
\end{theorem}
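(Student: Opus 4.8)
The plan is to obtain the bound directly from Lemma~\ref{thm:replication_loss_fragment} by a union bound. First I would observe that, under $R$-way replication, the file is irreparably lost precisely when at least one of its $k$ original fragments has all $R$ of its replicas lost. Writing $A_i$ for the event that the $i$-th fragment is irreparably lost, this gives the clean decomposition $F_R = \bigcup_{i=1}^k A_i$, since losing no fragment entirely means the whole file is recoverable.

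Next I would apply Boole's inequality to this union, yielding $\mathds{P}(F_R \mid {\cal L} = L; R) \leq \sum_{i=1}^k \mathds{P}(A_i \mid {\cal L} = L; R)$. Here I would invoke the node-loss symmetry assumed in Section~\ref{sec:systemmodel}: all nodes are equally likely to be lost, and no node stores more than one of the $n$ fragments of a given file, so each fragment's $R$ replicas occupy $R$ distinct nodes drawn under identical statistics. Consequently every $A_i$ has the same probability, which is exactly the quantity computed in Lemma~\ref{thm:replication_loss_fragment}, namely $\binom{L}{R}/\binom{N}{R}$ for $L \geq R$. Substituting the common value into the sum collapses it to $k\,\binom{L}{R}/\binom{N}{R}$, which is the claimed bound.

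Finally I would dispatch the boundary case separately: when $L < R$, strictly fewer than $R$ nodes are lost, so no fragment can possibly lose all $R$ of its replicas, each $A_i$ is empty, and $\mathds{P}(F_R \mid {\cal L}=L;R)=0$, matching the statement. This also matches the convention that $\binom{L}{R}=0$ for $L<R$, so the two cases can in fact be stated uniformly.

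The only subtlety here — and it is not really an obstacle — is that the bound is loose because the events $A_i$ are not disjoint: a single loss pattern can render several fragments simultaneously unrecoverable, so the union bound overcounts those configurations. The exact probability, accounting for these overlaps, is the combinatorial expression of Theorem~\ref{thm:replication}; the gap between the two is precisely the inclusion--exclusion correction that is being discarded in exchange for a closed form. I would mention this explicitly so the reader understands the bound is an over-estimate that is tightest in the regime where $L$ is small relative to $N$, where multiple simultaneous fragment failures are rare.
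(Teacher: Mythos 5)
Your proposal is correct and follows essentially the same route as the paper: decompose $F_R$ as the union of the per-fragment loss events $A_i$, apply the union bound, use node-loss symmetry to equate all $\mathds{P}(A_i \mid {\cal L}=L;R)$, and substitute the value from Lemma~\ref{thm:replication_loss_fragment}. Your additional remarks on the $L<R$ boundary case and on the looseness of the bound relative to Theorem~\ref{thm:replication} are accurate but not needed for the argument.
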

\begin{proof}
  Considering event of losing fragment $i$ is $A_i$, then the probability of losing a file is $\mathds{P}(F_R | {\cal L} = L; R) = \mathds{P}\left( \cup_{i=1}^k A_i | {\cal L} = L; R \right) \leq \sum_{i=1}^k \mathds{P} \left( A_i | {\cal L} = L; R \right) = k \mathds{P}\left( A | {\cal L} = L ; R \right)$ as the probability of losing any fragment is the same. Using Lemma~\ref{thm:replication_loss_fragment} completes the proof.
\end{proof}

\subsection{MDS Codes}
In contrast to the replication case, we can provide a closed-form, combinatorial solution to the problem of a lost file.
We consider a $(n,k)$ MDS code, e.g., a Reed Solomon code, which has $r = n-k$ additional coded fragments. 

\begin{theorem} \label{thm:erasure_coding_loss}
  For the network described in Section~\ref{sec:systemmodel}, the probability of the event $F_{ec}$ of losing a file using a $(n,k)$ MDS code is  
  \begin{eqnarray}
    {\cal P}_{ec} (n, k | L)  &=& \mathds{P}(F_{ec} | {\cal L} = L; n, k)  \\
    &=& \frac{1}{\binom{N}{L}}\sum_{l= n-k+1}^{L} \binom{n}{l} \binom{N-n}{L-l} 
    \end{eqnarray}
    for $L \geq n-k+1$ and zero otherwise.
\end{theorem}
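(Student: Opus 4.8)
The plan is to reduce the computation to a single hypergeometric counting argument, exploiting the defining recovery property of MDS codes. First I would recall that an $(n,k)$ MDS code permits reconstruction of the original $k$ fragments from \emph{any} $k$ of the $n$ stored fragments; equivalently, the file is irrecoverably lost if and only if strictly more than $r = n-k$ of its fragments are lost. Since each of the $n$ coded fragments resides on a distinct node by the Storage Policy of Section~\ref{sec:systemmodel}, losing a fragment coincides with losing the node that holds it. Thus the event $F_{ec}$ is precisely the event that at least $n-k+1$ of the $n$ file-bearing nodes lie in the set of $L$ lost nodes.

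Next I would model the number of lost file fragments. Under the node-loss assumption, the $L$ lost nodes form a uniformly chosen $L$-subset of the $N$ nodes, so each such subset occurs with probability $1/\binom{N}{L}$. Letting $l$ denote the number of the $n$ file nodes falling in this lost set, the number of $L$-subsets realizing a given $l$ is $\binom{n}{l}\binom{N-n}{L-l}$: choose which $l$ of the $n$ file nodes are lost, then which $L-l$ of the remaining $N-n$ non-file nodes complete the loss set. Hence $l$ is hypergeometric, with $\mathds{P}(l) = \binom{n}{l}\binom{N-n}{L-l}/\binom{N}{L}$.

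Finally I would sum this pmf over the loss-causing range $l \ge n-k+1$. Combining the two steps gives $\mathds{P}(F_{ec} \mid {\cal L} = L) = \tfrac{1}{\binom{N}{L}}\sum_{l=n-k+1}^{L} \binom{n}{l}\binom{N-n}{L-l}$, which is the claimed expression. For the boundary case $L < n-k+1$ the summation range is empty and the probability is zero, matching the stated condition.

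The only real subtlety, rather than a genuine obstacle, is the bookkeeping on the summation limits: I would confirm that extending the upper limit to $L$ (instead of $\min\{L,n\}$) introduces no spurious terms, which holds because $\binom{n}{l}=0$ whenever $l>n$, while $\binom{N-n}{L-l}=0$ automatically enforces $l \ge L-(N-n)$. Everything else follows directly from the MDS recovery threshold and the hypergeometric model, so unlike the replication cases no union bound or multinomial feasibility set is required, and the result is exact and closed-form.
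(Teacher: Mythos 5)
Your proposal is correct and follows essentially the same route as the paper's proof: both recognize that $F_{ec}$ occurs exactly when more than $n-k$ of the $n$ fragment-bearing nodes fall into the uniformly chosen set of $L$ lost nodes, count these configurations hypergeometrically as $\binom{n}{l}\binom{N-n}{L-l}$, and sum over $l \geq n-k+1$ before normalizing by $\binom{N}{L}$. Your added remark on why the upper summation limit $L$ is harmless (vanishing binomial coefficients) is a small bookkeeping point the paper leaves implicit, but it does not change the argument.
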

\begin{proof}
  The total number of \emph{deletion sets} is $\binom{N}{L}$ which may cause deletion of a file only if $L \geq n - k +1$ as a $(n,k)$ MDS code can tolerate up to $n - k$ deletions. 
  As the losses may involve nodes with and without fragments for the given file, counting the losses requires to consider the number of scenarios where $n-k+1$ or more node losses involve nodes containing a fragment and the remaining $L-l$ are placed in those without fragments. 
  For only $l$ out of $L$ losses affecting the file fragments, there are $\binom{n}{l} \binom{N-n}{L-l}$ options.
  The proof concludes by adding all cases of $ L \geq l \geq n - k + 1$ and dividing by the total number of deletion sets.
  \end{proof}

\section{Numerical Results}\label{sec:evaluation}
In this section we analyze various performance metrics of the HyRES scheme and compare it to replication and Reed Solomon code (an MDS code).
We also validate analytical results with simulations. 

\begin{figure}[t]
\begin{subfigure}[t]{\columnwidth} 
        \centering
        \includegraphics[width=0.93\columnwidth]{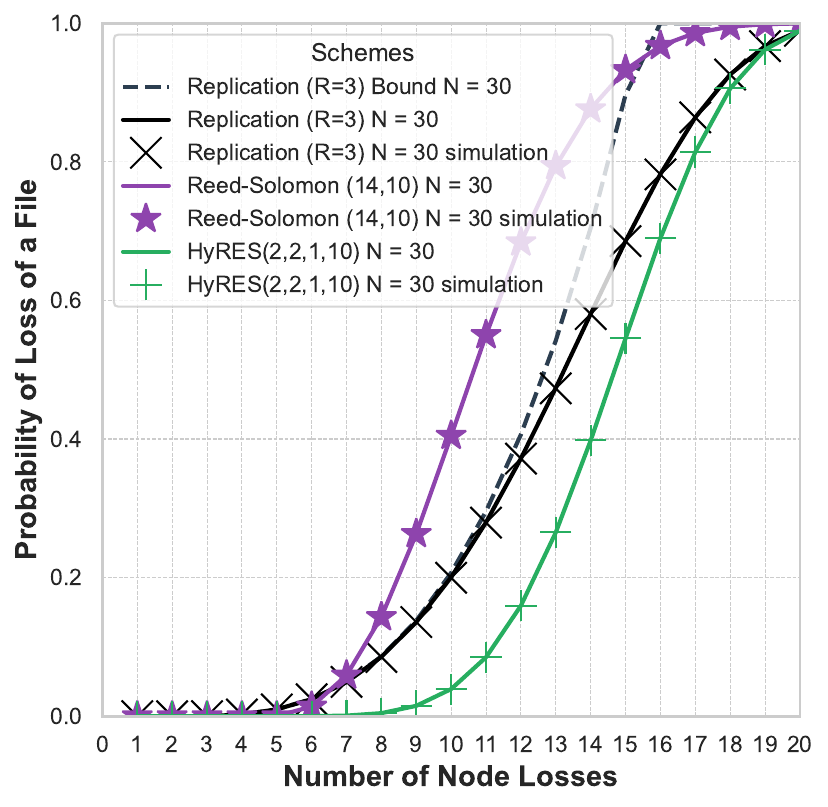} 
        \caption{Network Size $N=30$ nodes}
        \label{fig:sub1}
    \end{subfigure}
  \hfill
\begin{subfigure}[b]{\columnwidth} 
        \centering
        \includegraphics[width=0.93\columnwidth]{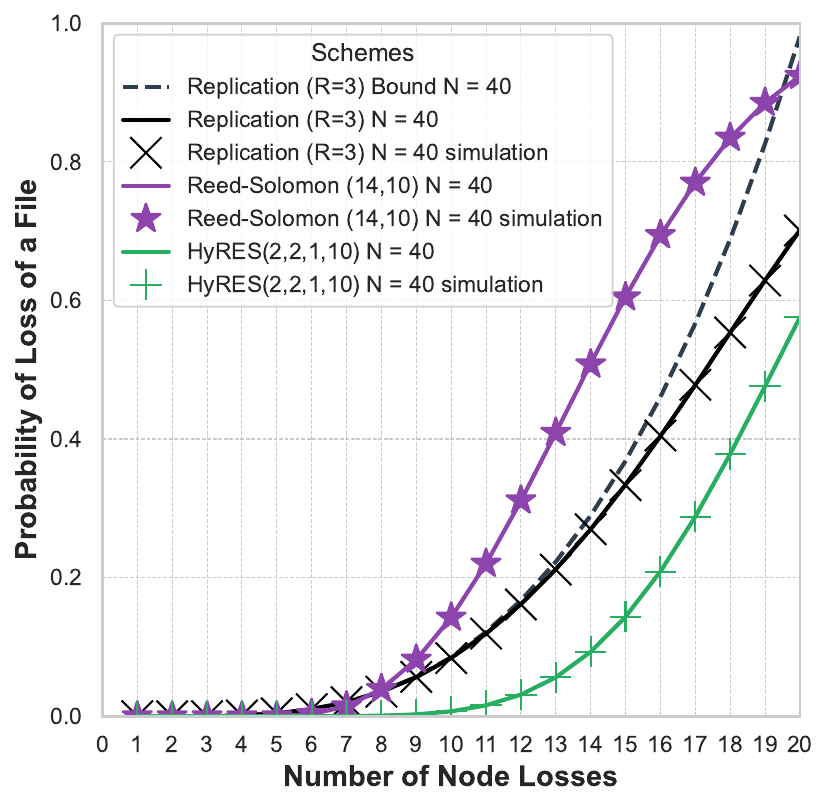} 
        \caption{Network Size $N=40$ nodes}
        \label{fig:sub2}
\end{subfigure}
  \caption{File loss probability for different node losses}
  \label{fig:loss_prob_network_size}
\end{figure}

\textbf{Network Size and Node Loss Effects on Loss Probability:} We study first the file loss probability of the HyRES $(2,2,1,10)$ scheme compared to a $(14,10)$ MDS code and $3$-way replication for different network sizes and node losses in the network in Figure~\ref{fig:loss_prob_network_size}.
Figures~\ref{fig:sub1} and~\ref{fig:sub2} show performance for networks with $N=30$ and $N=40$ nodes, respectively.
These results show that the loss probability of HyRES is much lower compared to classical approaches for all $L$ node losses in each of the networks.
Our results also show that the MDS scheme with the same worst case loss performance is significantly worse than both replication and HyRES.
These simulation results also verify our theoretical calculations and show that the upper bound derived for the replication scheme is tight for small node losses compared.

\textbf{Network Size Effect on Repair Costs for a Single Node Loss:} Although our analysis shows that HyRES schemes have repair traffic that is equal or higher than the repair traffic of a replication scheme with the same $k$ fragments per file, 
the mean repair cost due to a single node loss in the network is a more adequate comparison for all schemes under the same loss event.
Figure~\ref{fig:repair_net_size} shows the average repair traffic for simulations when a single node is lost in the network.
In this case, this traffic is lower the larger the $N$ for all schemes, which is reasonable given the fact that the likelihood of losing a node containing a fragment from the given file is also lowered.
However, Figure~\ref{fig:repair_net_size} also shows that HyRES $(2,2,1,10)$ provides a lower mean repair traffic compared to replication with $R=3$.
This is due to the fact that replication will use a total of $30$ nodes to store the file in this scenario, while HyRES would use only $23$. 
Finally, Figure~\ref{fig:repair_net_size} also shows for comparison the repair traffic measured in the event that a node loss triggers a fragment loss.

\begin{figure}[t]
  \centerline{\includegraphics[width=0.93\columnwidth]{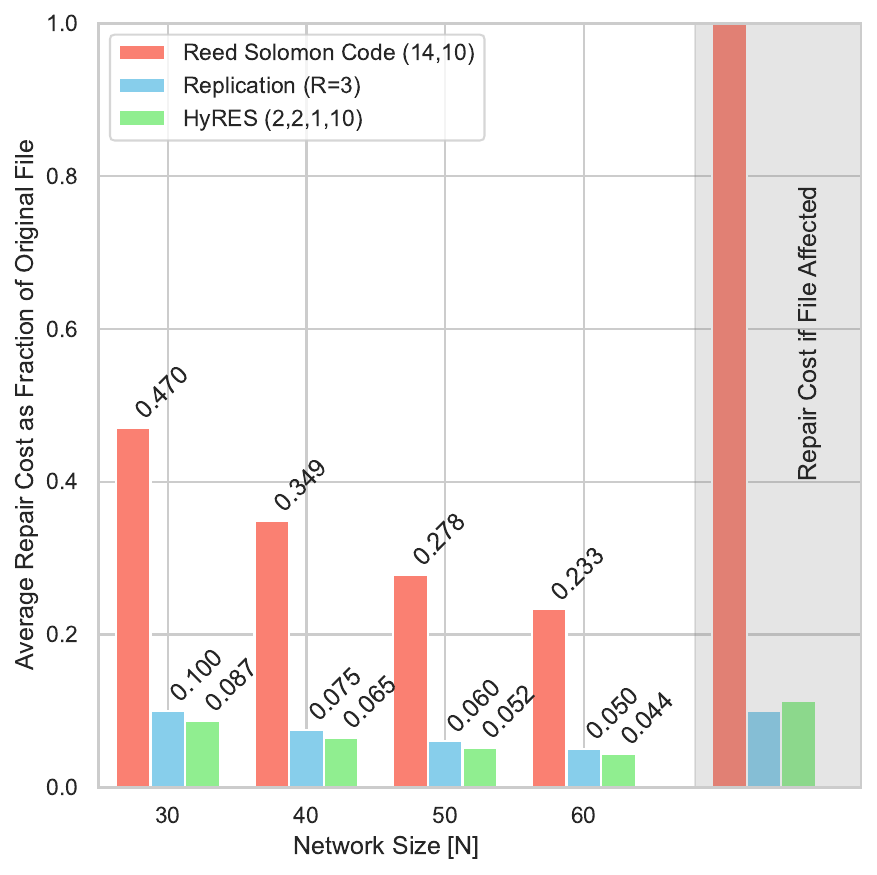}}
  \caption{ Measured repair traffic for a single node loss in the network considering different network sizes}
  \label{fig:repair_net_size}
\end{figure}

\textbf{Node Loss Effects on Repair Traffic:} As simultaneous node losses increase, multiple fragments need to be recovered. 
This could be achieved by repairing from another replica of the same fragment or by collecting $k$ distinct fragments (original or coded) to decode and re-encode the missing fragments in a given node.
If more than one original or coded fragment needs to be recovered, additional single fragment transmissions to the respective nodes will be carried out after decoding.
In the event of a file loss, no repair traffic is triggered. 
Thus, schemes that provide more protection against file losses are likely to experience larger repair traffic under simultaneous losses.
Figure~\ref{fig:rbar_and_line} supports this observation.
For example, the HyRES scheme has a low repair traffic for low node losses, while its repair traffic grows beyond all other schemes for large number of losses.
The MDS erasure codes tested for $L=10$ node losses reduces significantly its repair traffic with respect to $6$ node losses due to the fact that a file has probability higher than $1/2$ to be lost in this scenario.
Figure~\ref{fig:rbar_and_line} shows that the HyRES scheme still incurs in average only a cost of around $0.14$ per lost node at $6$ and $10$ losses, while the replication scheme would always incur a cost $0.1$ per lost node when it can recover the file in this scenario.
Thus, the average repair cost of the scheme per node even for a large number of simultaneous losses is relatively low.
\begin{figure}[t]
  \centerline{\includegraphics[width=0.93\columnwidth]{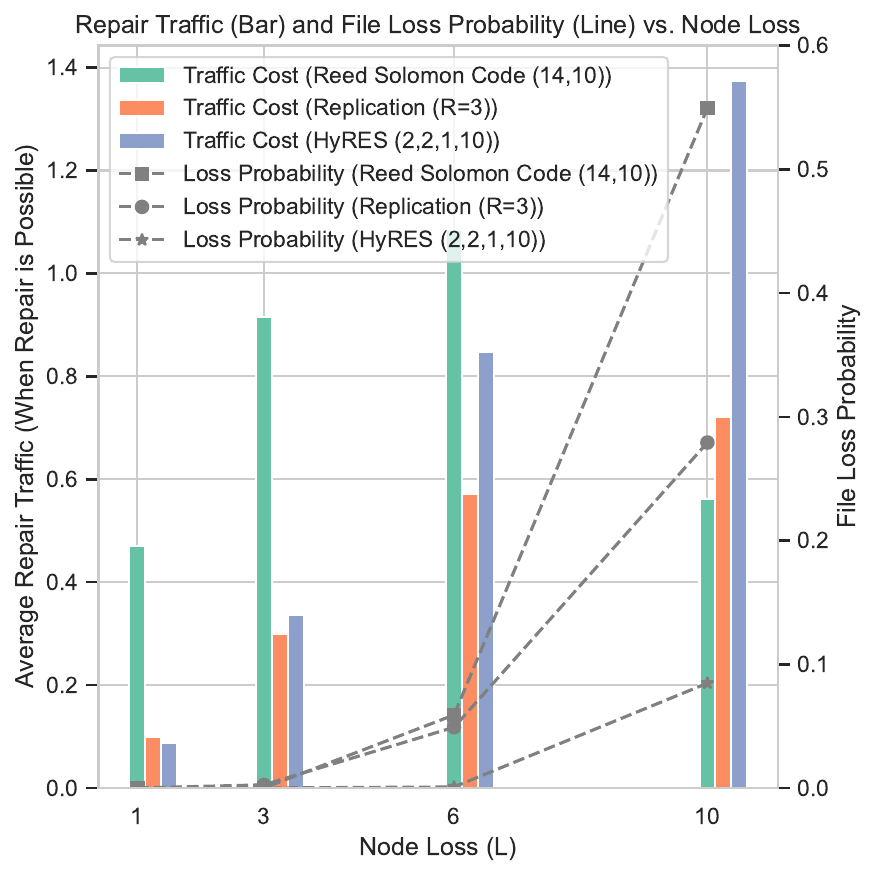}}
  \caption{ Average Repair Traffic per File and File Loss Probability for $N=30$ nodes for various node losses }
  \label{fig:rbar_and_line}
\end{figure}

\section{Conclusions} \label{sec:conclusions}
This work proposed and analyzed HyRES, a family of hybrid schemes for distributed storage to deliver better overall performance in terms of repair traffic, robustness and storage costs. 
We showed that HyRES generalizes previously proposed hybrid schemes and proposes novel alternatives with great practical potential.  
Our work also advocates for studying various schemes considering the network size and number of nodes used by each scheme to achieve a more fair performance analysis.
Using mathematical analysis, performance bounds and simulations, we show that some variants of HyRES can outperform equivalent replication and MDS erasure coded schemes in robustness (file loss probability) and repair traffics, while also providing storage costs considerable improvements with respect to replication.   
Future work may consider heterogeneous characteristics of nodes in the storage system, e.g., different probability of loss for different nodes, or potential structures and hierarcheis in the network, e.g., systems using racks dividing network costs in terms of intra-rack and inter-rack traffic for repair.

\section{Acknowledgement}
This work was financed in part by the Innovation Fund Denmark GreenCOM project under grant number 2079-00040B and by the EU research project EMPYREAN (101136024).
\ifCLASSOPTIONcaptionsoff
  \newpage
\fi
\bibliographystyle{IEEEtran}
\bibliography{ref}

\begin{thebibliography}{10}
\providecommand{\url}[1]{#1}
\csname url@samestyle\endcsname
\providecommand{\newblock}{\relax}
\providecommand{\bibinfo}[2]{#2}
\providecommand{\BIBentrySTDinterwordspacing}{\spaceskip=0pt\relax}
\providecommand{\BIBentryALTinterwordstretchfactor}{4}
\providecommand{\BIBentryALTinterwordspacing}{\spaceskip=\fontdimen2\font plus
\BIBentryALTinterwordstretchfactor\fontdimen3\font minus
  \fontdimen4\font\relax}
\providecommand{\BIBforeignlanguage}[2]{{%
\expandafter\ifx\csname l@#1\endcsname\relax
\typeout{** WARNING: IEEEtran.bst: No hyphenation pattern has been}%
\typeout{** loaded for the language `#1'. Using the pattern for}%
\typeout{** the default language instead.}%
\else
\language=\csname l@#1\endcsname
\fi
#2}}
\providecommand{\BIBdecl}{\relax}
\BIBdecl

\bibitem{Dimakis2010}
A.~G. Dimakis \emph{et~al.}, ``Network coding for distributed storage
  systems,'' \emph{IEEE Transactions on Information Theory}, vol.~56, no.~9,
  pp. 4539--4551, 2010.

\bibitem{Hou2019}
H.~Hou \emph{et~al.}, ``Rack-aware regenerating codes for data centers,''
  \emph{IEEE Transactions on Information Theory}, vol.~65, no.~8, pp.
  4730--4745, 2019.

\bibitem{Zhao2019}
X.~Zhao \emph{et~al.}, ``A leader improves things: Reducing signaling and
  complexity in optimal iot storage,'' \emph{IEEE Communications Letters},
  vol.~23, no.~4, pp. 696--699, 2019.

\bibitem{Patra2022}
A.~Patra \emph{et~al.}, ``Node repair on connected graphs,'' \emph{IEEE
  Transactions on Information Theory}, vol.~68, no.~5, pp. 3081--3095, 2022.

\bibitem{Patra2025}
------, ``Generalized regenerating codes and node repair on graphs,''
  \emph{IEEE Transactions on Information Theory}, vol.~71, no.~3, pp.
  1613--1630, 2025.

\bibitem{Kretsis2021}
A.~Kretsis \emph{et~al.}, ``Serrano: Transparent application deployment in a
  secure, accelerated and cognitive cloud continuum,'' in \emph{2021 IEEE
  International Mediterranean Conference on Communications and Networking
  (MeditCom)}, 2021, pp. 55--60.

\bibitem{Kretsis2025}
------, ``Integrating heterogeneous hardware and software platforms towards the
  edge-cloud-hpc continuum,'' \emph{IEEE Communications Standards Magazine},
  pp. 1--1, 2025.

\bibitem{cocytus2017}
\BIBentryALTinterwordspacing
H.~Chen \emph{et~al.}, ``Efficient and available in-memory kv-store with hybrid
  erasure coding and replication,'' \emph{ACM Trans. Storage}, vol.~13, no.~3,
  Sep. 2017. [Online]. Available: \url{https://doi.org/10.1145/3129900}
\BIBentrySTDinterwordspacing

\bibitem{legostore2022}
\BIBentryALTinterwordspacing
H.~Zare \emph{et~al.}, ``Legostore: a linearizable geo-distributed store
  combining replication and erasure coding,'' \emph{Proc. VLDB Endow.},
  vol.~15, no.~10, p. 2201–2215, Jun. 2022. [Online]. Available:
  \url{https://doi.org/10.14778/3547305.3547323}
\BIBentrySTDinterwordspacing

\bibitem{Feher2018}
M.~Feher \emph{et~al.}, ``Secure and scalable key value storage for managing
  big data in smart cities using intel sgx,'' in \emph{2018 IEEE International
  Conference on Smart Cloud (SmartCloud)}, 2018, pp. 70--76.

\bibitem{Brito2019}
A.~Brito \emph{et~al.}, ``Secure end-to-end processing of smart metering
  data,'' \emph{Journal of Cloud Computing}, vol.~8, no.~1, 2019.

\bibitem{hybrid2021}
\BIBentryALTinterwordspacing
Z.~Li \emph{et~al.}, ``Er-store: A hybrid storage mechanism with erasure coding
  and replication in distributed database systems,'' \emph{Scientific
  Programming}, vol. 2021, no.~1, p. 9910942, 2021. [Online]. Available:
  \url{https://onlinelibrary.wiley.com/doi/abs/10.1155/2021/9910942}
\BIBentrySTDinterwordspacing

\bibitem{Ho2006}
T.~Ho \emph{et~al.}, ``A random linear network coding approach to multicast,''
  \emph{IEEE Transactions on Information Theory}, vol.~52, no.~10, pp.
  4413--4430, 2006.

\bibitem{ZHOU2023}
\BIBentryALTinterwordspacing
A.~Zhou \emph{et~al.}, ``A cost-efficient hybrid redundancy coding scheme for
  wireless storage systems,'' \emph{Computer Communications}, vol. 203, pp.
  226--237, 2023. [Online]. Available:
  \url{https://www.sciencedirect.com/science/article/pii/S0140366423000944}
\BIBentrySTDinterwordspacing

\bibitem{DR-MDS2019}
S.~Gu \emph{et~al.}, ``Dr-mds: An energy-efficient coding scheme in d2d
  distributed storage network for the internet of things,'' \emph{IEEE Access},
  vol.~7, pp. 24\,179--24\,191, 2019.

\bibitem{XORBAS2013}
\BIBentryALTinterwordspacing
M.~Sathiamoorthy \emph{et~al.}, ``Xoring elephants: novel erasure codes for big
  data,'' \emph{Proc. VLDB Endow.}, vol.~6, no.~5, p. 325–336, Mar. 2013.
  [Online]. Available: \url{https://doi.org/10.14778/2535573.2488339}
\BIBentrySTDinterwordspacing

\bibitem{Copysets2013}
A.~Cidon \emph{et~al.}, ``Copysets: reducing the frequency of data loss in
  cloud storage,'' in \emph{Proceedings of the 2013 USENIX Conference on Annual
  Technical Conference}, ser. USENIX ATC'13.\hskip 1em plus 0.5em minus
  0.4em\relax USA: USENIX Association, 2013, p. 37–48.

\end{thebibliography}

\end{document}